\documentclass[letterpaper, 10 pt, conference]{ieeeconf}  
\pdfminorversion=4

\usepackage{times}
\usepackage{latexsym}
\usepackage[utf8]{inputenc}
\usepackage{graphicx}
\usepackage[ruled,linesnumbered]{algorithm2e}
\usepackage{graphicx}
\usepackage{amsmath,amsfonts,bm}
\usepackage{float}
\usepackage{bbm}
\DeclareMathOperator*{\argmin}{argmin} 
\usepackage{hyperref}\usepackage{url}
\usepackage{multirow}
\usepackage{subcaption}
\usepackage{caption}
\usepackage{xcolor}
\usepackage{amssymb}  
\usepackage{epstopdf}
\usepackage{mathtools}
\usepackage{dsfont}
\usepackage{tikz}
\usepackage{siunitx}
\usepackage{hyperref}

\usepackage{balance}    
\usepackage{amsthm}
\usepackage[noadjust]{cite} 

\def\sq{\mathbin{{\strut\rule{1.25ex}{1.25ex}}}}

\IEEEoverridecommandlockouts                              
\overrideIEEEmargins

\newtheoremstyle{boldStyle}
  {\topsep}
  {\topsep}
  {\itshape}
  {0pt}
  {\bfseries}
  {.}
  { }
  {\thmname{#1}\thmnumber{ #2}\thmnote{ (#3)}}

\newtheoremstyle{italicStyle}
  {\topsep}
  {\topsep}
  {}
  {0pt}
  {\bfseries}
  {.}
  { }
  {\thmname{#1}\thmnumber{ #2}\thmnote{ (#3)}}

\theoremstyle{boldStyle}

\newtheorem{theorem}{Theorem}

\newtheorem{proposition}{Proposition}
\newtheorem{definition}{Definition}

\theoremstyle{italicStyle}
\newtheorem{assumption}{Assumption}
\newtheorem{remark}{Remark}

\renewenvironment{proof}{{\textbf{Proof:}}}{\hfill$\sq$}

\newcommand{\Smax}{S^{\text{max}}_{\mathcal{X}}}

\title{\LARGE \bf
Control of Unknown Nonlinear Systems with\\ Linear Time-Varying MPC
}

\author{Dimitris Papadimitriou$^{1}$, Ugo Rosolia$^{2}$, and Francesco Borrelli$^{1}$
\thanks{$^{1}$The authors are with the Department of Mechanical Engineering,
        University of California at Berkeley, Berkeley, CA 94701 USA
        {\tt\small dimitri@berkeley.edu}}
\thanks{$^{2}$The author is with the Department of Mechanical and Civil Engineering,
California Institute of Technology, Pasadena, CA 91125 USA}
}
\begin{document}
\maketitle

\begin{abstract}
We present a Model Predictive Control (MPC) strategy for unknown input-affine nonlinear dynamical systems. A non-parametric method is used to estimate the nonlinear dynamics from observed data. 
The estimated nonlinear dynamics  are then linearized over time-varying regions of the state space to construct an Affine Time-Varying (ATV) model. Error bounds arising from the estimation and linearization procedure are computed by using sampling techniques. 
The ATV model and the uncertainty sets are used to design a robust Model Predictive Controller (MPC) which guarantees safety for the unknown system with high probability.
A simple nonlinear example demonstrates the effectiveness of the approach where commonly used estimation and linearization methods fail.
\end{abstract}

\section{Introduction}
Learning the underlying dynamics model of a process has been studied extensively in the traditional system identification literature. Such techniques can be roughly classified into linear \cite{nagumo1967learning,lai1982least} and nonlinear \cite{sjoberg1995non,verdult2002non} methods  applied to time-variant and time-invariant systems. For a comprehensive review of the topic we refer the reader to~\cite{lennart1999system}. Both methods have been integrated in the MPC framework in which system identification strategies are used to estimate the model dynamics~\cite{aswani2013provably,kocijan2004gaussian,rosolia2019learning}. For nonlinear systems a linearization step is usually required to efficiently compute the MPC control action.  Common approaches include successive system linearization \cite{cannon2009successive}, feedback linearization \cite{simon2013nonlinear} and real-time  iteration schemes \cite{diehl2005real}. 

Estimation methods are naturally accompanied by statistical uncertainty. Instability and constraint violation can occur if such uncertainty is not taken into account in the control design. For linear systems, one method to account for the  discrepancy between the actual and estimated dynamics in the control design, is to incorporate the estimation uncertainty into a robust control framework. The authors in~\cite{terzi2018learning, dean2019safely} used a linear regression strategy to identify both a nominal model and the disturbance domain to design a robust controller. In~adaptive MPC strategies~\cite{tanaskovic2014adaptive,tanaskovic2019adaptive,bujarCDC18}, set-membership approaches are used to identify the set of possible parameters and/or the domain of the uncertainty that characterize the system's model. Afterwards, robust MPC strategies for additive~\cite{bemporad1999robust, gonzalez2011online} or parametric~\cite{evans2012robust,fleming2014robust,hanema2017stabilizing,abbas2019tube,hanema2020heterogeneously} uncertainty are used to guarantee robust recursive constraint satisfaction. Another approach to estimate nonlinear systems is via Gaussian Process Regression (GPR)   \cite{kocijan2004gaussian,koller2018learning, klenske2015gaussian, hewing2018cautious}. GPR can be used to identify a nominal model and confidence bounds that may be used to tighten the constraint set over the planning horizon.

We consider nonlinear input affine systems subject to bounded additive uncertainty.
Our contributions can be outlined as follows. We first provide a method to estimate the unknown nonlinear dynamics. Then we outline a novel method to linearize the nonlinear estimated dynamics around an open loop trajectory obtained from the previous MPC time step. In particular, we use information from the codomain of the estimated dynamics function to obtain linearization regions in the domain that satisfy certain criteria. The benefit, compared to commonly used linearization techniques, is that by linearizing using information from the codomain we can control for the linearization error. Furthermore, the control design that incorporates this technique,  is computationally more efficient compared to a piece-wise affine control formulation. 
We design a robust MPC that accounts for the estimation and linearization errors as well as the uncertainty acting on the system. The controller safely steers the system to the goal set while providing probabilistic guarantees for constraint satisfaction. Finally, we illustrate the advantages of our method using a simple and intuitive demonstration.

The paper is organised as follows. Section~\ref{seq:probform} describes the problem formulation. The next section introduces a general MPC formulation with time-varying constraints.  In Sections~\ref{sec:ml} and~\ref{sec:mlin} we present the estimation and linearization techniques respectively. Section~\ref{sec:PR-mpc} presents the control algorithm. Finally, Section~\ref{seq:results} presents a simple illustration that showcases the advantages of our method. 

\section{Problem Formulation}\label{seq:probform}
We are interested in controlling a discrete time dynamical system governed by nonlinear dynamics of the following form  
\begin{equation}
	x_{t+1} = f(x_{t}) + Bu_{t} + w_t\label{system_1},
\end{equation}
with time index $t\in\mathbb{Z}^+$,  state vector $x_t\in\mathcal{X}_t\subseteq\mathbb{R}^{n}$, control input vector $u_t\in\mathcal{U}_t\subseteq\mathbb{R}^{m}$, known control matrix $B\in\mathbb{R}^{n\times m}$ and bounded uncertainty $w_t \in \mathcal{W}\subset\mathbb{R}^n$. 
The dynamics function $f$ is unknown and it is estimated from recorded trajectories. Furthermore, the system is subject to the following state and input constraints
\begin{equation}\label{eq:stateInputConstraints}
    x_t \in \mathcal{X} \text{ and } u_{t}\in \mathcal{U}.
\end{equation}

Our goal is to synthesize a control policy $\pi:\mathcal{X}\rightarrow\mathcal{U}$ which steers the system from a starting state $x_s$ to the target state-input pair $(x_f,u_f)$, while satisfying state and input constraints~\eqref{eq:stateInputConstraints}.
For the rest of the paper we will make the following assumptions.

\begin{assumption}
We are given a feasible trajectory that can drive the system from $x_s$ to $x_f$. 
\label{asum:init-traj}
\end{assumption}

\begin{assumption}
We have access to a data set $\mathcal{D}$ consisting of the following states-action  tuples gathered from past trajectories of the system
\begin{equation*}
    (x^j,x^j_+, u^j)
\end{equation*}
where $x^j_+ = f(x^j)+Bu^j$, for $j=1,\ldots,M$.
\label{asum:data-coll}
\end{assumption}

The first assumption is required for the first time step of our algorithm covered in Section~\ref{sec:PR-mpc}. The second assumption allows us to exploit these trajectories to construct a time-varying approximation of the system dynamics from~\eqref{system_1} as will be shown in Sections~\ref{sec:ml} and ~\ref{sec:mlin}. The dynamics function $f$ is unknown, but we assume that we have access to state measurements.

\section{Time-Varying Model Predictive Control}
This section presents the time-varying MPC formulation to synthesize the control policy $\pi$. More specifically, at time $t$ we solve
\begin{subequations}\label{eq:FTOCP}
    \begin{align}
        \!\!J^*(x_t,T_t) = \!\! \min_{{u_{t|t}}, \ldots, u_{t+T|t}}  ~ &\sum_{k=t}^{t+T_t-1} \!\!\! h(\bar x_{k|t},u_{k|t}) + Q(\bar x_{t+T_t|t}) \notag \\
        \text{s.t.} \quad\quad~ & \bar x_{t|t} - x_t \in \mathcal{E}^{\alpha}_{t|t}, u_{k|t} \in \mathcal{U}_{k|t}  \\
        & \bar x_{k+1|t} = \hat{f}_{k|t} (\bar x_{k|t}) +  Bu_{k|t} \label{eq:nominal}\\
        & \bar x_{k|t} \in \mathcal{X}_{k|t} \ominus \mathcal{E}^{\alpha}_{k|t}, \label{5d}\\
        & \bar x_{t+T_t|t} \in \mathcal{O} \ominus \mathcal{E}^{\alpha}_{t+T_t|t} \\
        &\forall k = {t, \ldots, t+T_t-1},
    \end{align}
\end{subequations}
where  the stage cost satisfies $h(x,u) > 0,~\forall x \in \mathbb{R}^n\setminus\{x_f\},~\forall u\in \mathbb{R}^m\setminus\{u_f\}$
and $h(x_f, u_f)=0$.
In~\eqref{eq:FTOCP} $x_{k|t}$ is the predicted state at time $k$ computed at time $t$. $T_t$ denotes the MPC horizon, $Q$ the terminal convex cost function while $x_t$ is the actual state of the system at time $t$\footnote{The symbols $\oplus$ and $\ominus$ denote the Minkowski sum and Pontryagin difference respectively.}. $\hat{f}_{k|t}$ is the time-varying approximation of the nonlinear  dynamics in~\eqref{system_1} and $\mathcal{X}_{k|t}$ is the set in which this approximation is valid. The sets $\mathcal{E}_{k|t}^\alpha$  represent the uncertainty from the estimation errors and model noise for a confidence level $\alpha$. The expression $\mathcal{X}_{k|t} \ominus \mathcal{E}^{\alpha}_{k|t}$ quantifies how conservative the controller should be in order to complete the task successfully in the presence of the aforementioned uncertainties. This constraint tightening technique is standard in robust MPC  \cite{bemporad1999robust, gossner1997stable}. The subsequent sections detail the construction of the sets $\mathcal{X}_{k|t}$
and $\mathcal{E}^{\alpha}_{k|t}$.

The optimal input sequence obtained from problem~\eqref{eq:FTOCP} $\bar u^*_{t|t},\ldots, \bar u^*_{t+T_t-1|t}$
steers the system from the current state $x_t$ to the goal set $\mathcal{O}$.
Given the optimal solution the MPC policy is
\begin{equation}\label{eq:MPC_policy}
    u_t = \pi(x_t) = u_{t|t}^*.
\end{equation}
At the next time step $t+1$, we solve~\eqref{eq:FTOCP} using the new measured state $x_{t+1}$ and the whole process is repeated until the goal set is reached.

In what follows we show how to construct the prediction model $\hat f_{k|t}$ from historical data. First, we introduce a nonlinear estimator $\hat f$ to identify the system dynamics in~\eqref{system_1}. Afterwards, we propose a sampling-based linearization strategy to approximate the nonlinear estimator with a linear function.
We also compute error bounds and the regions of the state space where the approximation is valid.

\section{Nonlinear Model Estimation}\label{sec:ml}
In this section, we present the nonlinear estimation strategy. In particular, we use the stored data set $\mathcal{D}$ from Assumption~\ref{asum:data-coll} along with non-parametric regression to estimate the nonlinear dynamics function \cite{rosolia2019learning, atkeson1997locally}. The estimate $\hat{f}(x_t)$ at point $x_t$ is given by a linear regression of the points $f(x^j)$ on $x^j$ $, \forall j$. These points are weighted by a chosen kernel function  $K$, which in our case is the Epanechnikov function \cite{epanechnikov1969non}. The estimated function at a generic state $x_t$ is computed by solving the following problem
\begin{equation}
   \hat{a}(x_t), \hat A(x_t) = \underset{ \substack{  A\in\mathbb{R}^{n\times n} \\ a\in\mathbb{R}^n}}{\text{argmin}}~\sum_{j=1}^{M}||f(x^j)\!-\!a\!-\!Ax^j||_2^2K(x_t, x^{j}),
    \label{llr}
\end{equation}
where $\hat A(x_t)$ denotes the dependence of $\hat A$ on the state $x_t$, $Ax^j$ denotes a matrix vector multiplication and the point estimate $\hat f(x_t)$ is given by
\begin{equation}
    \hat{f}(x_t)=\hat{a}(x_t)+\hat{A}(x_t) x_t.
    \label{eq:linear-dyn}
\end{equation}

In the next section we will be using~\eqref{llr} to obtain point estimates of the dynamics on states belonging to previously computed trajectories. Furthermore, we will be defining regions of the state space where these affine system approximations are valid.

Estimating the model from a finite set of data points introduces statistical uncertainty. We are interested in estimating confidence regions for~\eqref{eq:linear-dyn} of the form
\begin{align}
    \mathbb{P}(\hat{f}(x)\in \mathcal{S}_{\alpha}(x))\geq 1-\alpha, \forall x \in \mathcal{X},
    \label{eq:alpha_def}
\end{align}
where $\mathcal{S}_{\alpha}(x)$ denotes the set  we expect $\hat f(x)$ to lie in with probability $1-\alpha$. We quantify this uncertainty by calculating percentile confidence intervals using bootstrap samples from $\mathcal{D}$ \cite{efron1994introduction}. We denote the lower and upper bound of that set with $\mathcal{S}_{\alpha/2}(x)\in\mathbb{R}^n$ and $\mathcal{S}_{1-\alpha/2}(x)\in\mathbb{R}^n$ respectively. The worst case n-ary Cartesian product of uncertainties that occurs within a set $\mathcal{X}$ is denoted with
\begin{equation}
    \Smax:=\left[ \underset{x\in\mathcal{X}}{\text{min}} \mathcal{S}_{\alpha/2}(x),\underset{x\in\mathcal{X}}{\text{max}} \mathcal{S}_{1-\alpha/2}(x)\right],
    \label{est-uncert-max}
\end{equation} 
where the min and max are taken component-wise and the dependence of $S^{\text{max}}_{\mathcal{X}}$ on $\alpha$ is omitted for simplicity. After obtaining   $\hat{f}$ at $x_t$ using (\ref{llr}) the estimated model of the system~is
\begin{equation*}
    x_{t+1}=\hat{a}(x_t)+\hat{A}(x_t) x_t + Bu_t 
    \label{eq:estimated},
\end{equation*}
 where $x_t\in\mathcal{X}_t$ and  $u_t\in\mathcal{U}_t$. It should be noted that although $\hat{f}(x_t)$ from~\eqref{eq:linear-dyn}  is a nonlinear function in $x_t$, it is affine for fixed $\hat a$ and $\hat A$.

  Up until now we have encountered two sources of uncertainty. The first one is the model noise $w_t$ and the second is the uncertainty due to the estimation. Before developing the framework that deals with these uncertainties, we need to convert the estimated dynamics function in a format that will allow us to design a robust MPC. The next section outlines a method that adaptively approximates a nonlinear function with a locally affine one.

\section{Affine Time-Varying Model Approximation} \label{sec:mlin}
When the system model is nonlinear and unknown, the optimal control policy may be approximated after estimating the system dynamics. Common approaches first compute a piecewise affine model estimate which is then used to design a Hybrid MPC. However, these strategies are computationally expensive as the Hybrid MPC is being recast as a Mixed Integer Quadratic Program. An alternative approach to approximate the optimal control policy from~\eqref{eq:FTOCP} is to estimate a nonlinear model and then design a nonlinear MPC problem which is solved using a Real Time Iteration (RTI) scheme or a nonlinear optimization solver. 

This section describes a method that locally linearizes the estimated dynamics function. Furthermore, it defines regions where the difference between the linearized and nonlinear estimated dynamics is bounded. This
linearization strategy allows us to formulate~\eqref{eq:FTOCP} as a convex optimization problem
that can be solved efficiently.

In order to predict $x_{t+1}$ as accurately as possible our goal is to linearize the estimated function $\hat{f}$ in a region around $x_t$. Intuitively, in one dimension,  the larger the slope of $\hat f$ is around $x_t$ the tighter the linearization region around $x_t$ should be so that deviations from $x_t$ do not lead to large deviations in $\hat f(x_t)$. To quantify this, we use a linearization technique that incorporates information about the gradient of the estimated dynamics function. Algorithm 1 proposes a technique which  samples and gradually expands the domain space around a chosen linearization point until the error, as measured in the codomain, surpasses a specified threshold.
 
\IncMargin{1.5em}
\begin{algorithm}[h!] 
	\caption{Local Linear Approximation}\label{algo-2}
	\SetAlgoLined
	\textbf{Input:} Linearization states: $x^{\ell}_{k|t}\in\mathbb{R}^{n}$ for $k=t,
	\ldots,t+T_t$\\
	\textbf{Input:} Sampling step: $\Delta x\in\mathbb{R}$\\
	\textbf{Input:} Maximum linearization error: $\epsilon_{lin}$\\
	\textbf{Output:} Linearization regions $\mathcal{X}_{k|t}$ and estimated dynamics $\hat f_{\mathcal{X}_{k|t}}$\\
	\For{$k=t,\ldots,t+T_t$}{
	Compute $ \hat a  \leftarrow \hat{a}(x^{\ell}_{k|t}),\hat A  \leftarrow \hat A(x^{\ell}_{k|t})$ from~\eqref{llr}\\
	Let $\hat{f}_{\mathcal{X}_{k|t}}(x) \leftarrow\hat{a}+\hat A x\label{fhat_mathcal}$\\
	Set $grid = \{x^{\ell}_{k|t}\}$\\
	Set $step \leftarrow 1$\\
	   \While{$\text{max}_{ x\in grid}|\hat{f}(x)-\hat f_{\mathcal{X}_{k|t}}(x)|\leq \epsilon_{lin}$}{
	   $V^{step}\leftarrow \text{set of all permutations of }$ 
	   $(\pm j\Delta x,\ldots,\pm j\Delta x)\in\mathbb{R}^n$, for $j\in\{0,\ldots,step\}$\\ 
	   $grid  \leftarrow \bigcup_{v\in V^{step}}\{x_{k|t}^{\ell}+ v \}$\\
	   $step  \leftarrow step + 1$
	    }
        $\mathcal{X}_{k|t} \leftarrow Conv(grid) \label{mathcalX}$ }
	Return $\mathcal{X}_{k|t},\hat f_{\mathcal{X}_{k|t}}, k=t,\ldots, t+T_t$ 
\end{algorithm}

Algorithm~\ref{algo-2} determines regions in the domain space for which the linearization of the dynamics is accurate within $\epsilon_{lin}$ tolerance. Intuitively, we construct the linearization regions by incrementally expanding a grid around the linearization points. This grid can be seen as a hypercube with grid points on its edges. 
Once the linearization error in one of the grid points becomes greater than the specified threshold we stop expanding the linearization region. Using information from the codomain of the function allows us to linearize in a more informative manner as we are more conservative in regions where dynamics fluctuate significantly.

By devising an adaptive way to linearize the dynamics we managed to approximate the system in a form amenable for robust model predictive control but we also introduced an additional error term. Given our linearization strategy in Algorithm~\ref{algo-2}, we denote the worst case linearization error within an interval $\mathcal{X}_{k|t}$ as
\begin{equation}
   L^{\text{max}}_{\mathcal{X}_{k|t}}:= [-\epsilon_{lin},\epsilon_{lin}]^n.
    \label{eq:eps-lin-max}
\end{equation}
$L^{\text{max}}_{\mathcal{X}_{k|t}}$ is the n-ary Cartesian power of the worst-case linearization error among the sampling points in the grid.

Having gathered all the individual pieces we are now ready to bound the uncertainty in the estimated dynamics by taking into account the three sources of uncertainty: model noise $w_t$, estimation  uncertainty $S_{\mathcal{X}}^{\text{max}}$ and linearization error $L_{\mathcal{X}}^{\text{max}}$.
\begin{remark}
We underline that it would be possible to guarantee error bounds within the grid, and not just the vertices, by leveraging the Lipschitz properties of the nonlinear estimate.
\end{remark}

\begin{definition}[Cumulative Error Sets]\label{def:cumErr}
Let  $x^*_{t|t}, \ldots, x^*_{t+T_t|t}$ be a state sequence,  $\mathcal{X}_{k|t}$ the linearization regions and $\hat f_{\mathcal{X}_{k|t}}$ the estimated affine dynamics from Algorithm~\ref{algo-2} with $x^{\ell}_{k|t}=x^*_{k|t}$ for $k=t,\ldots,t+T_t$, $S^{\text{max}}_{\mathcal{X}_{k|t}}$ the worst case estimation error from~\eqref{est-uncert-max} and  $L^{\text{max}}_{\mathcal{X}_{k|t}}$ the linearization errors from~\eqref{eq:eps-lin-max}. 
Consider the feasible input sequence $[u_{t|t},\ldots, u_{t+T_t-1|t}]$ and the associated systems
\begin{subequations}
\begin{align}
    x_{k+1|t} &= f(x_{k|t}) +  Bu_{k|t} + w_{k|t} \label{def:predTrue},\\ 
    \bar x_{k+1|t} &= \hat f_{\mathcal{X}_{k|t}}(\bar x_{k|t}) +  Bu_{k|t} \label{def:predEst},
\end{align}    
\end{subequations}
where the true state $x_{k|t}$ and the nominal state $\bar x_{k|t}$ are initialized with $x_{t|t} = \bar{x}_{t|t} = x_{t|t}^*$.
Then we define the cumulative error sets as
\begin{equation}
    \mathcal{E}^{\alpha}_{k+1|t}=\hat{f}_{\mathcal{X}_{k|t}}(\mathcal{E}^{\alpha}_{k|t})\oplus\mathcal{W}\oplus S^{\text{max}}_{\mathcal{X}_{k|t}}\oplus L_{\mathcal{X}_{k|t}}^{\text{max}}
    \label{eq-prop},
\end{equation}
with $\mathcal{E}^{\alpha}_{t|t} = 0$ and a confidence level $\alpha$  in~\eqref{eq:alpha_def}.
\end{definition}

\begin{proposition}\label{theorem1}
Let $\mathcal{D}$ be a data set from Assumption~\ref{asum:data-coll}. Assume that a trajectory of length $T_t$ is given at time step $t$, $x^*_{t|t}, x^*_{t+1|t}, \ldots, x^*_{t+T_t|t}$ . Let $\mathcal{X}_{k|t}$ be the linearization regions from Algorithm~\ref{algo-2} using $x^{\ell}_{k|t}=x^*_{k|t}$ for $k=t,\ldots,t+T_t$  and $\mathcal{E}^{\alpha}_{k|t}$ the cumulative error sets from Definition~\ref{def:cumErr}. If the nominal system~\eqref{def:predEst} satisfies $\bar x_{k|t}\in\mathcal{X}_{k|t} \ominus \mathcal{E}^{\alpha}_{k|t}\;\forall k=t,\ldots, t+T_t$, then for the true unknown system~\eqref{def:predTrue} we have that
\begin{equation*}
    \bar{x}_{k|t} \oplus \mathcal{E}^{\alpha}_{k|t}\in\mathcal{X}_{k|t}, ~\forall k=t,\ldots,t+T_t,
\end{equation*}
with probability $(1-\alpha)^{k-t}$.
\end{proposition}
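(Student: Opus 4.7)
The plan is to prove the proposition by induction on $k$, tracking the difference between the true trajectory $x_{k|t}$ from \eqref{def:predTrue} and the nominal trajectory $\bar x_{k|t}$ from \eqref{def:predEst}, and showing that at every step this difference lies in $\mathcal{E}^{\alpha}_{k|t}$ with probability $(1-\alpha)^{k-t}$. The base case $k=t$ is immediate, since $x_{t|t}=\bar x_{t|t}=x^*_{t|t}$ and $\mathcal{E}^{\alpha}_{t|t}=\{0\}$, so the event holds with probability $1=(1-\alpha)^0$; combined with the tightening hypothesis $\bar x_{t|t}\in\mathcal{X}_{t|t}\ominus\mathcal{E}^\alpha_{t|t}$, the conclusion follows.

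For the inductive step, I would add and subtract $\hat f(x_{k|t})$ and $\hat f_{\mathcal{X}_{k|t}}(x_{k|t})$ in the expression for $x_{k+1|t}-\bar x_{k+1|t}$ to obtain the decomposition
\begin{equation*}
x_{k+1|t}-\bar x_{k+1|t} = \hat A(x^{\ell}_{k|t})(x_{k|t}-\bar x_{k|t}) + \bigl[\hat f(x_{k|t})-\hat f_{\mathcal{X}_{k|t}}(x_{k|t})\bigr] + \bigl[f(x_{k|t})-\hat f(x_{k|t})\bigr] + w_{k|t}.
\end{equation*}
Then I would argue term by term: the first term lies in $\hat f_{\mathcal{X}_{k|t}}(\mathcal{E}^\alpha_{k|t})$ by the inductive hypothesis together with the affine form \eqref{eq:linear-dyn}; the second (linearization) term lies in $L^{\max}_{\mathcal{X}_{k|t}}$ by \eqref{eq:eps-lin-max} from Algorithm~\ref{algo-2}, which requires $x_{k|t}\in\mathcal{X}_{k|t}$ (ensured by the hypothesis that $\bar x_{k|t}\oplus\mathcal{E}^\alpha_{k|t}\subseteq\mathcal{X}_{k|t}$ together with the inductive hypothesis); the third (estimation) term lies in $S^{\max}_{\mathcal{X}_{k|t}}$ with probability at least $1-\alpha$ by the confidence statement \eqref{eq:alpha_def} and the worst-case Cartesian construction \eqref{est-uncert-max}; and the fourth (noise) term lies in $\mathcal{W}$ by assumption. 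Taking Minkowski sums recovers precisely the recursion \eqref{eq-prop}, so $x_{k+1|t}\in\bar x_{k+1|t}\oplus\mathcal{E}^{\alpha}_{k+1|t}$. Since the tightening $\bar x_{k+1|t}\in\mathcal{X}_{k+1|t}\ominus\mathcal{E}^{\alpha}_{k+1|t}$ is assumed, the true state sits in $\mathcal{X}_{k+1|t}$.

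The probability bookkeeping then combines the event that the previous inclusion held, whose probability is $(1-\alpha)^{k-t}$ by induction, with the fresh single-step event that the estimation error lies in $S^{\max}_{\mathcal{X}_{k|t}}$, whose probability is $1-\alpha$. Treating these as independent across time steps (a reasonable interpretation since the bootstrap-based confidence sets are drawn independently at each prediction step) yields the compounded lower bound $(1-\alpha)^{k+1-t}$, closing the induction.

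The main obstacle I anticipate is the handling of the linearization error: Algorithm~\ref{algo-2} only certifies $\epsilon_{lin}$ at the sampled grid points, whereas the argument above requires the bound to hold at the possibly off-grid point $x_{k|t}\in\mathcal{X}_{k|t}$. The remark preceding the proposition signals that this step is closed by a Lipschitz argument on the nonlinear estimator, so I would invoke it explicitly (or assume it as a standing regularity condition) to extend the pointwise grid bound to the convex hull $\mathcal{X}_{k|t}=\mathrm{Conv}(grid)$. A secondary subtlety is the affine map convention: the notation $\hat f_{\mathcal{X}_{k|t}}(\mathcal{E}^\alpha_{k|t})$ in \eqref{eq-prop} must be read as the action of the linear part $\hat A(x^\ell_{k|t})$ on the error set, not of the full affine map, so that the constant $\hat a$ does not spuriously accumulate; I would make this identification explicit when matching the decomposition above to the recursion.
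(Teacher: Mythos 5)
Your proposal follows essentially the same route as the paper's proof: an induction on $k$ that splits the one-step discrepancy between \eqref{def:predTrue} and \eqref{def:predEst} into the propagated error, the linearization error, the estimation error, and the additive noise, matches these to the Minkowski-sum recursion \eqref{eq-prop}, and compounds the per-step $(1-\alpha)$ confidence to get $(1-\alpha)^{k-t}$. Your two flagged subtleties --- that the $\epsilon_{lin}$ bound is only certified at grid points (the paper defers this to a Lipschitz argument in its preceding remark) and that $\hat f_{\mathcal{X}_{k|t}}(\mathcal{E}^{\alpha}_{k|t})$ must act via the linear part only --- are real and are handled more explicitly in your write-up than in the paper's.
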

\begin{proof}
The main idea behind the proof is breaking up the dynamics into a nominal and a noise model in which the latter includes all the uncertainty terms. This strategy is fairly standard in shrinking tube robust MPC strategies~\cite{chisci2001systems}. We will prove the proposition by induction. First we decompose the dynamical system $\forall k \in \{t,\ldots, t+T_t\}$ as follows

\begin{align}
    &\bar{x}_{k+1|t}=\hat{f}_{\mathcal{X}_{k|t}}(\bar{x}_{k|t})+Bu_{k|t} \label{eq-nom}\\
    &e_{k+1|t}=\hat{f}_{\mathcal{X}_{k|t}}( e_{k|t})+\bar{w}_{k|t}\label{eq-noise}\\
    &x_{k+1|t}=\bar{x}_{k+1|t}+e_{k+1|t}\label{eq-condition}
\end{align}
with $e_{k+1|t}\in\mathcal{E}_{k+1|t}$. Equations~\eqref{eq-nom} and~\eqref{eq-noise} refer to the dynamics of the nominal system and the error terms respectively. In~\eqref{eq-noise} $\bar{w}_{k|t}$ now includes the model estimation uncertainty and the linearization error on top of the system disturbance $w_{k|t}$. 
 For $k=t$
     \begin{align}
            &e_{t|t} \in \mathcal{E}_{t|t}\\
            &x_{t|t}=\bar{x}_{t|t} \in \mathcal{X}_{t|t}
    \end{align}
 since at time $k=t$ we assume perfect knowledge of the state, $\mathcal{E}_{t|t}=\{0\}$. Now we assume that at time step $k$
 \begin{align}
            &e_{k|t} \in \mathcal{E}_{k|t}\\
            &x_{k|t}=\bar{x}_{k|t}\oplus\mathcal{E}_{k|t} \in \mathcal{X}_{k|t}.
        \end{align}
 Then at the next time step $k+1$ we have that
\begin{align}
    &\bar{x}_{k+1|t}=\hat{f}_{\mathcal{X}_{k|t}}(\bar{x}_{k|t})+Bu_{k|t} \\
    &e_{k+1|t}=\hat{f}_{\mathcal{X}_{k|t}}( e_{k|t})+\bar{w}_{k|t}
\end{align}
where $\bar{w}_{k|t}\in\{w+s+\ell|w\in \mathcal{W},s\in S^{\text{max}}_{\mathcal{X}_{k|t}},\ell\in L^{\text{max}}_{\mathcal{X}_{k|t}}\}$. More concisely using Minkowski sums we have that $e_{k+1|t}\in \hat{f}_{\mathcal{X}_{k|t}}(\mathcal{E}_{k|t})\oplus \mathcal{W}\oplus S_{\mathcal{X}_{k|t}}^{\text{max}}\oplus L_{\mathcal{X}_{k|t}}^{\text{max}}=\mathcal{E}_{k+1|t}$. By assumption we know that $\bar x_{k+1|t}\in\mathcal{X}_{k+1|t} \ominus \mathcal{E}_{k+1|t}$ and using~\eqref{eq-condition} we obtain that $x_{k+1|t}=\bar{x}_{k+1|t}\oplus \mathcal{E}_{k+1|t}\in\mathcal{X}_{k+1|t}$. Therefore, by induction, we have that
\begin{equation*}
\begin{aligned}
    &\bar{x}_{k+1|t}=\hat{f}_{\mathcal{X}_{k|t}}(\bar{x}_{k|t})+Bu_{k|t}, \forall k\in \{t,\ldots, T_t\} \\
    &e_{k+1|t}=\hat{f}_{\mathcal{X}_{k|t}}( e_{k|t})+\bar{w}_{k|t}, \forall k\in \{t,\ldots, T_t\}
\end{aligned}
\end{equation*}

The above is true for a confidence level $\alpha=0$. However, in reality $s_{k|t}\in\mathcal{S}^{\text{max}}_{\mathcal{X}_{k|t}}$ for $k=t,\ldots,t+T_t$ w.p. $1-\alpha$ for some $\alpha>0$. At each time step $k=t,\ldots,t+T_t$ we know that $s_{k|t}\in\mathcal{S}^{\text{max}}_{\mathcal{X}_{k|t}}$ w.p. $1-\alpha$. Hence for an arbitrary $k\in\{t,\ldots , t+T_t\}$ in order for  $x_{k|t} \in \bar{x}_{k|t} \oplus \mathcal{E}_{k|t}\in\mathcal{X}_{k|t}$ to hold we require that $s_{k'|t}\in\mathcal{S}^{\text{max}}_{\mathcal{X}_{k'|t}},\forall k'=1,\ldots, k$ which happens w.p. $(1-\alpha)^{k-t}$.
\end{proof}

\vspace{2pt}
The sections so far quantified the uncertainty that arises due to model noise, estimation and linearization in a probabilistic manner. We believe this is a realistic strategy for identification of smooth nonlinear dynamical systems. In the next Section~\ref{sec:PR-mpc} we leverage such identification strategy to safely control an underlying unknown system.

\section{Model Predictive Control Design}\label{sec:PR-mpc}
In this section we describe the proposed control strategy which guarantees safety with a desired probability. The control action is computed with Algorithm~\ref{algo-1} that takes as inputs the current state $x_t$, the prediction horizon $T_t$ and the predicted trajectory at the previous time step. First, we use the predicted trajectory to initialize a candidate solution for Problem~\eqref{eq:FTOCP} (line~2). Afterwards, in line~3 we compute the affine approximated dynamics $\hat f_{\mathcal{X}_{k|t}}$, the linearization regions $\mathcal{X}_{k|t}$ and cumulative error sets $\mathcal{E}^{\alpha}_{k|t}$ using the strategies from the previous sections. Finally, we solve the finite time optimal control problem using the approximated dynamics. Notice that Problem~\eqref{eq:FTOCP} may not be recursively feasible and therefore we leverage a shrinking horizon methodology using Assumption~\ref{asum:init-traj}. As we will discuss in Theorem~\ref{theorem2}, this strategy allows us to guarantee safety of Algorithm~\ref{algo-1} in closed-loop with the unknown uncertain system~\eqref{system_1} with high probability.


\begin{algorithm}[h] 
	\SetAlgoLined
	Given  $x_t$, $T_t$, optimal trajectory at the previous time step $\bar x^*_{k|t-1},\;k=t,\ldots,t+ T_t$\\
	Set linearization states $x^{\ell}_{k|t} =  x^*_{k|t-1},\;k=t,\ldots, t+T_t-1$\\
	Compute $ \hat{f}_{\mathcal{X}_{k|t}}$, $\mathcal{X}_{k|t}$ using Algorithm~\ref{algo-2}  and $\mathcal{E}^{\alpha}_{k|t}$ using Definition~\ref{def:cumErr} \\
	\uIf{$J^*_t(x_t,T_{t})$ feasible}{
	    Set $T_{t+1} = T_{t}$\\
	    Let $u^*_{t|t},\ldots,u^*_{t+T_t-1}=\argmin J^*_t(x_t,T_{t})$
	}
	\Else{
	Set $\mathcal{E}^{\alpha}_{k|t}=\mathcal{E}^{\alpha}_{k|t-1}$, $\mathcal{X}_{k|t}=\mathcal{X}_{k|t-1}$, $\hat{f}_{\mathcal{X}_{k|t}}=\hat{f}_{\mathcal{X}_{k|t-1}}$\\
	Solve $J^*_t(x_t,T_{t}-1)$\\
    Set $T_{t+1} = T_{t}-1$\\
    Let $u^*_{t|t},\ldots,u^*_{t+T_t-2}=\argmin J^*_t(x_t,T_{t}-1)$
	}
	Return $T_{t+1}$, $u^*_{k|t},\;k=t,\ldots, t+T_t-1$ 
	\caption{Proposed Strategy}
	\label{algo-1}
\end{algorithm}
It is important to clarify why in our proposed algorithm we used the predicted trajectory of the MPC at the previous time step to obtain a linearization trajectory $x^{\ell}_{k|t}$ (line~2).
We use the fact that in MPC at each time step $t$ we obtain an open loop trajectory from the solution of~\eqref{eq:FTOCP}. Along with this trajectory we also obtain the corresponding input sequence that achieves it. Let $x_t=\bar x_{t|t}$ be the current state. After applying the MPC control input~\eqref{eq:MPC_policy} as obtained from Algorithm~\ref{algo-1}, system~\eqref{system_1} will propagate to state $x_{t+1}=f(x_{t})+Bu_{t}+w_{t}$ at the next time step\footnote{Note that the original trajectory was computed using the estimated dynamics, so naturally it will not be the same as if using the actual dynamics.}. Then by solving~\eqref{eq:FTOCP} again, with the only difference being that the new starting state is $\bar x_{t+1|t+1}=x_{t+1}$, we would naturally expect the points of the new open loop trajectory to lie close to their corresponding points from the previous trajectory. More specifically, we expect the euclidean distance between $\bar x_{t+1|t},\bar x_{t+2|t},\dots,\bar x_{t+T-1|t}$ and $\bar x_{t+1|t+1},\bar x_{t+2|t+1},\dots,\bar x_{t+T-1|t+1}$ to be small.  This allows us to use the previously computed trajectory as the trajectory around which we linearize our estimated dynamics, needed for the MPC at the next time step.

\subsection{Properties}
In this section, we show that the proposed controller is feasible for all $t\in\{0,\ldots,N\}$ with probability $(1-\alpha)^N$. In particular, we leverage Proposition~\ref{theorem1} to show that with probability $(1-\alpha)^N$ Algorithm \ref{algo-1} successfully returns a feasible sequence of input actions at all time instances.

\begin{theorem}\label{theorem2}
Consider the policy~\eqref{eq:MPC_policy} in closed-loop with system~\eqref{system_1}. Let Assumptions~\ref{asum:init-traj}-\ref{asum:data-coll} hold and $N$ be the duration of the task. Then the closed-loop system~\eqref{system_1} and~\eqref{eq:MPC_policy} satisfies state and input constraints with probability $(1-\alpha)^N$ at all times $t\in\{0,\ldots,N\}$.
\end{theorem}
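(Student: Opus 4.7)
The plan is to proceed by induction on $t \in \{0, 1, \ldots, N\}$, proving simultaneously that (i) Algorithm~\ref{algo-1} returns a well-defined feasible input sequence at time $t$, and (ii) the realized closed-loop state $x_t$ satisfies the original constraints $\mathcal{X}, \mathcal{U}$. The probabilistic statement will then follow by aggregating a single-step event $(1-\alpha)$ across the $N$ time steps.

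For the base case at $t=0$, Assumption~\ref{asum:init-traj} furnishes a feasible trajectory from $x_s$ to $x_f$, which serves as the candidate for the first call to Algorithm~\ref{algo-1}. Since $\mathcal{E}^{\alpha}_{0|0} = \{0\}$ by Definition~\ref{def:cumErr}, the initial tightened constraints reduce to the original constraints, and feasibility at $t=0$ holds deterministically. For the inductive step, suppose feasibility holds at time $t$ with horizon $T_t$ so that Algorithm~\ref{algo-1} produces either an optimizer of $J^*_t(x_t, T_t)$ or of the shrinking-horizon version $J^*_t(x_t, T_t-1)$. Applying Proposition~\ref{theorem1} to the executed open-loop sequence, the one-step realized state satisfies $x_{t+1} \in \bar x_{t+1|t} \oplus \mathcal{E}^{\alpha}_{t+1|t} \subseteq \mathcal{X}_{t+1|t} \subseteq \mathcal{X}$ with probability at least $(1-\alpha)$, and $u_t \in \mathcal{U}$ holds by construction of~\eqref{eq:FTOCP}. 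To propagate feasibility, I would argue: if $J^*_{t+1}(x_{t+1}, T_t)$ is feasible, the \textbf{if}-branch of Algorithm~\ref{algo-1} closes the induction directly; otherwise, the \textbf{else}-branch reuses the previously computed linearization regions, approximate dynamics, and error sets $\mathcal{X}_{k|t}, \hat f_{\mathcal{X}_{k|t}}, \mathcal{E}^{\alpha}_{k|t}$ and solves $J^*_{t+1}(x_{t+1}, T_t-1)$. Because the tail $\bar x^*_{t+1|t},\ldots,\bar x^*_{t+T_t|t}$ of the previous nominal optimizer already satisfies the tightened constraints~\eqref{5d} and the terminal inclusion, and because $x_{t+1} \in \bar x_{t+1|t} \oplus \mathcal{E}^{\alpha}_{t+1|t}$ places the new initial condition inside the initial tube constraint of the shrunken problem, the shifted tail is a feasible candidate for $J^*_{t+1}(x_{t+1}, T_t-1)$.

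To obtain the $(1-\alpha)^N$ bound, I would observe that the only probabilistic event invoked in the inductive step above is the one-step realization $s_{t|t} \in \mathcal{S}^{\mathrm{max}}_{\mathcal{X}_{t|t}}$ from the estimation uncertainty (Proposition~\ref{theorem1} with $k-t=1$), which holds with probability $(1-\alpha)$ by~\eqref{eq:alpha_def}. Treating these per-step events as independent across $t = 0, 1, \ldots, N-1$, as the authors do at the end of Proposition~\ref{theorem1}'s proof, the joint event that closed-loop constraints and feasibility are preserved throughout the task has probability $\prod_{t=0}^{N-1}(1-\alpha) = (1-\alpha)^N$.

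The main obstacle I anticipate is the recursive feasibility argument in the \textbf{else}-branch, specifically justifying that the reused sets $\mathcal{X}_{k|t}$ and $\mathcal{E}^{\alpha}_{k|t}$ remain the correct ones for the shrunken problem centered at $x_{t+1}$, and that the shifted tail actually satisfies the initial-tube constraint $\bar x_{t+1|t+1} - x_{t+1} \in \mathcal{E}^{\alpha}_{t+1|t+1}$ after relabeling indices. A secondary subtlety is whether the per-step uncertainty realizations are genuinely independent across $t$; if not, a union-bound argument yields $1 - N\alpha$ rather than $(1-\alpha)^N$, so the independence (or at least mutual compatibility) of the bootstrap estimation events across times must be explicitly invoked to match the statement of the theorem.
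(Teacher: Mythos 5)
Your proposal is correct and follows essentially the same route as the paper: induction on $t$, using the shifted tail of the previous optimizer (with the reused sets $\mathcal{X}_{k|t}$, $\hat f_{\mathcal{X}_{k|t}}$, $\mathcal{E}^{\alpha}_{k|t}$) as a feasible candidate for the shrinking-horizon problem, invoking Proposition~\ref{theorem1} for the one-step event of probability $1-\alpha$, and multiplying across the $N$ steps. The subtleties you flag at the end (justifying the reused sets in the \textbf{else}-branch and the independence of the per-step estimation events) are precisely the points the paper's own proof passes over without further justification, so you have if anything been more explicit than the original.
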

\begin{proof}
As in standard MPC theory we proceed by induction~\cite{borrelli2017predictive}. Assume that at time $t$ a finite time optimal control problem solved by Algorithm~\ref{algo-1} is feasible and let
\begin{equation}
    \begin{aligned}
            &[u_{t|t}^*, \ldots, u_{t+T_t-1|t}^*] \\
            &[\bar x_{t|t}^*, \ldots, \bar x_{t+T_t|t}^*]
    \end{aligned}
\end{equation}
be the optimal input and state sequence. Notice, that if
\begin{equation}\label{eq:conditionFeas}
    x_{t+1} - \bar x_{t+1|t}^* \in \mathcal{E}_{1|t},
\end{equation}
then at the next time step $t+1$, we have that the shifted state and input sequences
\begin{equation}
\begin{aligned}
    &[u_{t+1|t}^*, \ldots, u_{t+T_t-1|t}^*]\\
    &[\bar x_{t+1|t}^*, \ldots, \bar x_{t+T_t|t}^*]
\end{aligned}
\end{equation}
are feasible for problem~\eqref{eq:FTOCP} with prediction horizon $T_t-1$ and for $\mathcal{E}_{k|t+1}=\mathcal{E}_{k|t}$, $\mathcal{X}_{k|t+1}=\mathcal{X}_{k|t}$ and $\hat{f}_{\mathcal{X}_{k|t+1}}=\hat{f}_{\mathcal{X}_{k|t}},~\forall k \in \{t,\ldots, T_t-1\}$. From Proposition~\ref{theorem1} we have that~\eqref{eq:conditionFeas} holds with probability $1-\alpha$, therefore problem $J_t^*(x_{t+1}, T_t-1)$ from Algorithm \ref{algo-1} is feasible with probability $1-\alpha$. At time $t+1$ Algorithm~\ref{algo-1} returns a feasible sequence of input actions with probability $1-\alpha$.\\
Concluding, we have shown that if at time $t$ Algorithm \ref{algo-1} returns a feasible sequence of inputs, then at time $t+1$ Algorithm~\ref{algo-1} is feasible with probability $1-\alpha$. By assumption we have that at time $t=0$ Problem~\eqref{eq:FTOCP} is feasible for $T_0 = T$. Therefore, we conclude by induction that for a control task of length $N$, Algorithm~\ref{algo-1} is feasible and are satisfied for all $t\in \{0,N\}$ with probability $(1-\alpha)^N$.
\end{proof}

\vspace{2pt}
The model predictive control design we outlined above provides a framework to learn and control robustly a nonlinear system. Furthermore, the use of constraint tightening by considering the statistical uncertainty and the linearization error while planning has a self-improving effect on the algorithm. More specifically, it encourages the controller to move with smaller steps when these errors are high and hence gather more informative data points. These data points can then be used online to provide more accurate estimates.

\section{Results}\label{seq:results}
This section compares our method to three commonly used approaches to control nonlinear systems. 
\begin{itemize}
    \item 
The first one is an \textit{unconstrained} MPC that uses locally linear approximations of the dynamics around the linearization trajectory (obtained as explained in Section~\ref{sec:mlin}) without imposing any further tightening constraints, i.e., without constraint~\eqref{5d} in~\eqref{eq:FTOCP}. 
\item The second one is a \textit{linear} MPC where the system dynamics are estimated with a linear model around the current state~$x_t$ throughout the domain. 
\item Finally, we compare our strategy with a \textit{naive} method that linearizes the dynamics around a linearization trajectory but determines a priori the size of the regions in which the linearization is valid. More specifically, lines~8-16 in Algorithm~\ref{algo-2} are no longer needed. The regions $\mathcal{X}_{k|t}$ now have the form 
\begin{equation}\label{eq:naivTol}
\mathcal{X}_{k|t}=\{x\;|\;||x-x_{k|t}^*||_2\leq tol\},
\end{equation}
for some user specified tolerance level~$tol$.
 \end{itemize}
 The simplicity of our example gives us visual insights on why the suggested constraint tightening technique is vital to successfully control the system. First we assume that the true underlying dynamics of the system have the following form $x_{t+1}=\sqrt[3]{x_{t}}+u_t+w_t$, with $x_t\in\mathbb{R}, u_t\in\mathbb{R}$ and $w_t\sim
\psi(\mu,\sigma^2,\tau)$. We use $\psi$ to denote a truncated at $\pm \tau$ normal distribution. The distribution has mean $\mu=0$, standard deviation $\sigma=0.2$ and $\tau=0.05$ which in our example correspond to approximately 5\% uncertainty in the dynamics. Given an initial data set $\mathcal{D}$ of size approximately $M\approx200$ we estimate $\hat{f}$ using non-parametric regression with Epanechnikov kernels. The true function $f$ along with the estimated $\hat{f}$ can be seen in Figure~\ref{fig:estim}. 

\begin{figure}[h!]
\centering
  \includegraphics[trim={0 0 0 0.7cm},scale=0.57]{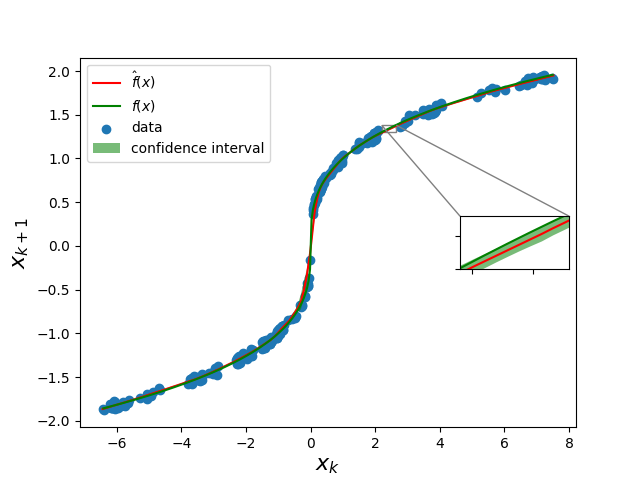}
  \captionof{figure}{True and estimated nonlinear dynamics functions.}
  \label{fig:estim}
\end{figure}

We are interested in driving the system to a terminal state $x_f\in\mathcal{O}$ starting from an initial state $x_s$. At each time step $t$ the MPC is designed with stage and terminal costs $h(x_{k|t}, u_{k|t})=(x_{k|t}-x_f)^TQ(x_{k|t}-x_f)+u_{k|t}^TRu_{k|t}$ and $Q(x_{t+T_t|t})=(x_{t+T_t|t}-x_f)^TQ(x_{t+T_t|t}-x_f)$ respectively.

Furthermore, we set $x_s=4.0$, $x_f=-1.0$, $\mathcal{O} = \{x |\; ||x-x_f||\leq 0.1 \}$, $T_0=6$, $Q=1$, $R=100$, $\alpha=0.05$ and the duration of the task $N=8$. Note that $x_f$ is an equilibrium point for $u_f=0$. Algorithm~\ref{algo-1} requires the trajectory of the previous time step which in the first step is not available. To overcome this in the first time step we use $T_0$ equally spaced points between $x_s$ and $x_f$ around which we perform the linearization.

 \begin{figure}[H]
    \centering
    \includegraphics[trim={0 0 0 0.8cm},scale = 0.53]{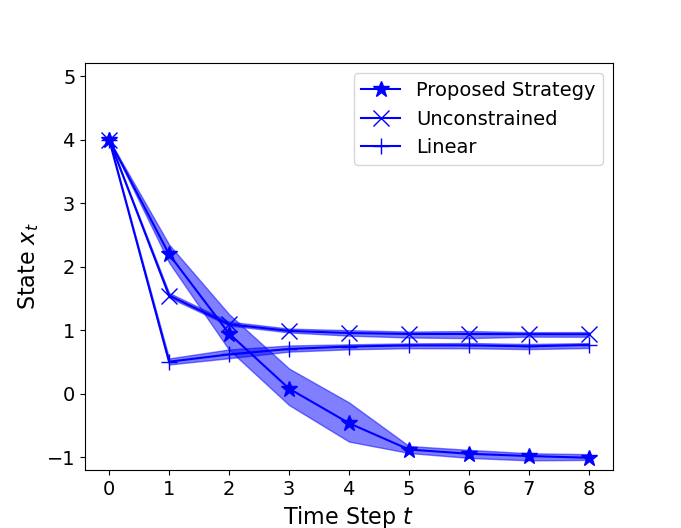}
     \caption{Closed loop trajectories for the \textit{linear}, \textit{unconstrained} and proposed methods.}
    \label{fig:plot_comparison}
\end{figure}
Figure \ref{fig:plot_comparison} shows the closed loop trajectories of our method along with the \textit{linear} and the \textit{unconstrained} ones for $10$ different realizations of the disturbance.
At time $t=0$ all controllers are initialized at $x_s=4.0$. The vertical axis of the plots corresponds to the state $x_t$ and the horizontal axis corresponds to the time step $t=1,\ldots,N$. After $8$ time steps our proposed method reaches the goal set $\mathcal{O}$ while both other methods are trapped around state $x=1.0$. The plotted open loop trajectories in Figure \ref{fig:trajectories_comparison}, along with the dynamics function in Figure~\ref{fig:estim}, provide valuable intuition on why the two other methods fail. The \textit{linear} controller at time step $t=4$ is at $x_{4|4}\approx 0.75$ and plans a trajectory whose first step is state $x_{5|4}\approx0.7$ but due to the modeling mismatch stays around state $0.75$ indefinitely. This modeling mismatch also causes the \textit{unconstrained} controller to fail. More specifically, the controller is overly confident that it can move from state $x_{4|4}\approx 0.9$ to state $x_{8|4}\approx 0.75$ by applying small control inputs and then wrongfully applies most of the control input at the last time step. The fact that the controller initially uses small inputs causes it to also get trapped around state $x_t\approx0.9$. Our proposed strategy uses more accurate linearization of the dynamics, alleviating the effects of modeling mismatch and consistently outperforming the other two methods.
\begin{figure}[h]
    \centering
    \includegraphics[trim={0 0 0 1.1cm},clip,width=9cm,scale = 0.43]{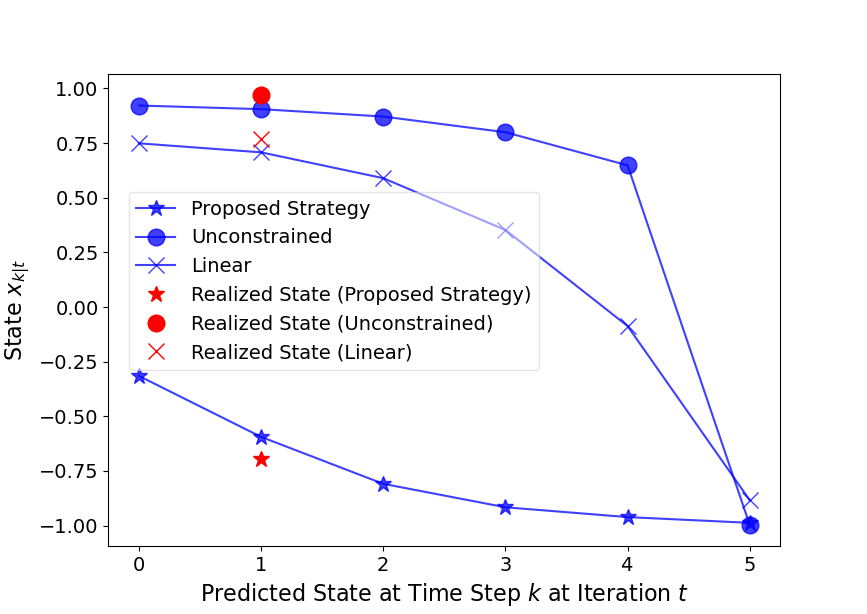}
     \caption{Open loop trajectories at time step $t=4$.}
    \label{fig:trajectories_comparison}
\end{figure}

We also compare our strategy to the \textit{naive} MPC which does not utilize the cumulative error sets for constraint tightening, i.e., $\mathcal{E}^{\alpha}_{k|t}=\emptyset$ in~\eqref{5d}.
Furthermore, in this naive method we compute the linearization regions $\mathcal{X}_{k|t}$ for $\;k=t,\ldots, t+T_t-1$ using~\eqref{eq:naivTol} with $tol$ taking values in $[0.1,1]$ with $0.1$ increments.
We observe that for values of tolerance greater than $0.2$ this method fails while for values smaller than $0.2$ it succeeds. In the successful experiments however, the cumulative cost of the controller is more that four times higher compared to the cost incurred by our method.  Intuitively, when the tolerance level is large, the naive MPC fails as it resembles the \textit{unconstrained} MPC. On the other hand for small tolerance levels the controller is very conservative, leading to successful completion of the task but at a higher cost.

\section{Conclusion}
In conclusion, our work merges elements from machine learning, model linearization and robust control theory to construct a control policy that safely controls a system with probabilistic guarantees. Our method incorporates the model estimation uncertainty and the model linearization error in the formulation mitigating the effects of modeling mismatch while solving a tractable ATV MPC problem. The effectiveness of the proposed strategy is illustrated with a simple  nonlinear example in which commonly  used  estimation  and  linearization  methods  fail.


\section{Acknowledgments}
This work was also sponsored by the Office of Naval
Research grant ONR-N00014-18-1-2833. The views and conclusions contained herein are those of the authors and should not be
interpreted as necessarily representing the official policies or endorsements, either expressed or
implied, of the Office of Naval Research or the US government.

\renewcommand{\baselinestretch}{1.02}
\bibliographystyle{IEEEtran}
\bibliography{IEEEabrv,biblio}

\end{document}